\definecolor{blueLink}{rgb}{0,0.2,0.8}
\newcommand{\E}{\ensuremath{\mathbf{E}}\xspace}
\newcommand{\LB}{\ensuremath{Q}\xspace}
\newcommand{\OPT}{\textsc{Opt}\xspace}
\newcommand{\ALG}{\textsc{Alg}\xspace}
\newcommand{\GKS}{{generalized $k$-server problem}\xspace}
\newcommand{\dist}{\mathsf{dist}}
\newcommand{\mindist}{\mathsf{mindist}}
\newcommand{\level}{\mathsf{level}}
\newcommand{\rank}{\mathsf{rank}}
\newcommand{\comp}{\ensuremath{\mathsf{comp}}}
\newcommand{\wc}{\ensuremath{\star}\xspace}
\newcommand{\SET}{\ensuremath{\mathcal{A}}\xspace}
\newcommand{\alghyd}{\ensuremath{\mathrm{H}}\xspace}
\newcommand{\alggks}{\ensuremath{\mathrm{G}}\xspace}
\newcommand{\con}{\textsc{Confine}\xspace}
\newcommand{\ouralg}{\ensuremath{\textsc{Herc}}\xspace}
\newcommand{\prob}{\eta}
\newcommand{\rprob}{\mu}
\newcommand{\tree}{\ensuremath{T}\xspace}
\newcommand{\treeroot}{\ensuremath{r_T}\xspace}
\newcommand{\height}{\ensuremath{h_T}\xspace}
\newcommand{\leaves}{\ensuremath{L_T}\xspace}
\newcommand{\DET}{\textsc{Det}\xspace}
\newcommand{\RAND}{\textsc{Rand}\xspace}
\newcommand{\e}{\ensuremath{\mathrm{e}}}
\newcommand{\eps}{\ensuremath{\varepsilon}}
\newcommand{\cspace}[1]{\ensuremath{S[#1]}\xspace}
\theoremstyle{plain}
\newtheorem{observation}[theorem]{Observation}
\title{Slaying Hydrae: Improved Bounds for Generalized k-Server in Uniform Metrics}
\author{Marcin Bienkowski}{Institute of Computer Science, University of Wrocław, Poland}{marcin.bienkowski@cs.uni.wroc.pl}{https://orcid.org/0000-0002-2453-7772}{}
\author{Łukasz Jeż}{Institute of Computer Science, University of Wrocław, Poland}{lukasz.jez@cs.uni.wroc.pl}{https://orcid.org/0000-0002-7375-0641}{}
\author{Paweł Schmidt}{Institute of Computer Science, University of Wrocław, Poland}{pawel.schmidt@cs.uni.wroc.pl}{}{}
\authorrunning{M. Bienkowski, Ł. Jeż and P. Schmidt}
\keywords{k-server, generalized k-server, competitive analysis}
\begin{document}

\maketitle
\begin{abstract}
The \GKS is an extension of the weighted $k$-server problem, which in turn
extends the classic $k$-server problem. In the \GKS, each of $k$ servers $s_1,
\dots, s_k$ remains in its own metric space $M_i$.  A~request is a~tuple
$(r_1,\dots,r_k)$, where $r_i \in M_i$, and to service it, an algorithm needs to
move at least one server $s_i$ to the point $r_i$. The objective is to
minimize the total distance traveled by all servers.

In this paper, we focus on the \GKS for the case where all $M_i$ are uniform metrics.
We show an $O(k^2 \cdot \log k)$-competitive randomized
algorithm improving over a~recent result by Bansal et al. [SODA 2018], who
gave an $O(k^3 \cdot \log k)$-competitive algorithm. 
To this end, we define an abstract online problem, called Hydra game, and we
show that a~randomized solution of low cost to this game implies a 
randomized algorithm to the \GKS with low competitive ratio. 

We also show that no randomized algorithm can achieve competitive ratio lower than 
$\Omega(k)$, thus improving the lower bound of $\Omega(k / \log^2 k)$ by Bansal et al.
\end{abstract}

\section{Introduction}

The $k$-server problem, introduced by Manasse et al.~\cite{MaMcSl90}, is one
of the most well-studied and influential cornerstones of online analysis. The
problem definition is deceivingly simple: There are $k$ servers, starting at a
fixed set of $k$ points of a~metric space $M$. An input is a~sequence of
requests (points of $M$) and to service a request, an algorithm needs to move
servers, so that at least one server ends at the request position. As typical
for online problems, the $k$-server problem is sequential in nature: an online
algorithm \ALG learns a new request only after it services the current one. The
cost of \ALG, defined as the total distance traveled by all its servers, is then
compared to the cost of an \emph{offline} solution \OPT; the ratio between
them, called \emph{competitive ratio}, is subject to minimization.

In a natural extension of the $k$-server problem, called the
\emph{\GKS}~\cite{KouTay04,SitSto06}, each server $s_i$ remains in its own
metric space $M_i$. The request is a $k$-tuple $(r_1,\dots,r_k)$, where $r_i
\in M_i$, and to service it, an algorithm needs to move servers, so that 
\emph{at least one} server $s_i$ ends at the request position $r_i$. The
original $k$-server problem corresponds to the case where all metric
spaces~$M_i$ are identical and each request is of the form $(r,\dots,r)$. The
\GKS contains many known online problems, such as
the weighted $k$-server problem~\cite{BaElKo17,ChiVis13,ChrSga04,FiaRic94} or the
CNN problem~\cite{Chroba03,KouTay04,Sitter14,SitSto06} as special cases.

So far, the existence of an $f(k)$-competitive algorithm for the \GKS in
arbitrary metric spaces remains open. Furthermore, even for specific spaces,
such as the line~\cite{KouTay04} or uniform metrics~\cite{BaElKo17,BaElKN18,KouTay04},
the \GKS requires techniques substantially different from those used to tackle
the classic $k$-server problems. For these reasons, studying this problem
could lead to new techniques for designing online algorithms.


\subsection{Previous Work}

After almost three decades of extensive research counted in dozens of
publications (see, e.g., a~slightly dated survey by Koutsoupias
\cite{Koutso09}), we are closer to understanding the nature of the classic
$k$-server problem. The competitive ratio achievable by deterministic algorithms
is between $k$~\cite{MaMcSl90} and $2k-1$~\cite{KouPap95} with $k$-competitive
algorithms known for special cases, such as uniform metrics~\cite{SleTar85},
lines and trees~\cite{ChKaPV91,ChrLar91}, or metrics of $k+1$
points~\cite{MaMcSl90}. Less is known about competitive ratios for randomized
algorithms: the best known lower bound holding for an arbitrary metric space is
$\Omega(\log k / \log \log k)$~\cite{BaLiMN03} and the currently best upper
bound of~$O(\log^6 k)$ has been recently obtained in a breakthrough
result~\cite{BuCLLM18,Lee18}.

In comparison, little is known about the \GKS. In
particular, algorithms attaining competitive ratios that are functions of $k$
exist only in a few special cases. The case of $k = 2$ has been solved by
Sitters and Stougie~\cite{SitSto06,Sitter14}, who gave constant competitive
algorithms for this setting. Results for $k \geq 3$ are known only for simpler
metric spaces, as described below.

\begin{description}

\item[A \emph{uniform metric} case] describes a scenario where all metrics $M_i$
are uniform with pairwise distances between different points equal to $1$. For
this case, Bansal et al.~\cite{BaElKN18} recently presented an~$O(k \cdot
2^k)$-competitive deterministic algorithm and an $O(k^3 \cdot \log
k)$-competitive randomized one. The deterministic competitive ratio is at least
$2^k - 1$ already when metrics $M_i$ have two points~\cite{KouTay04}.
Furthermore, using a straightforward reduction to the metrical task system (MTS)
problem~\cite{BoLiSa92}, they show that the randomized competitive ratio is at
least $\Omega(k / \log k)$~\cite{BaElKN18}.\footnote{In fact, for the \GKS in
uniform metrics, the paper by Bansal et al.~\cite{BaElKN18} claims only the
randomized lower bound of $\Omega(k / \log^2 k)$. To obtain it, they reduce the
problem to the $n$-state metrical task system (MTS) problem and apply a lower
bound of $\Omega(\log n / (\log \log n)^2)$ for MTS~\cite{BaBoMe06}. By using
their reduction and a~stronger lower bound of $\Omega(\log n / \log \log n)$ for
$n$-state MTS~\cite{BaLiMN03}, one could immediately obtain a lower bound of
$\Omega(k / \log k)$ for the \GKS.} 

\item[A \emph{weighted uniform metric} case] describes a scenario where each
metric $M_i$ is uniform, but they have different scales, i.e., the pairwise
distances between points of $M_i$ are equal to some values~$w_i>0$. For this
setting, Bansal et al.~\cite{BaElKN18} gave an $2^{2^{O(k)}}$-competitive
deterministic algorithm extending an $2^{2^{O(k)}}$-competitive algorithm for
the weighted $k$-server problem in uniform metrics~\cite{FiaRic94}. (The latter
problem corresponds to the case where all requests are of the form
$(r,\dots,r)$.) This matches a lower bound of
$2^{2^{\Omega(k)}}$~\cite{BaElKo17} (which also holds already for the weighted
$k$-server problem).

\end{description}


\subsection{Our Results and Paper Organization}

In this paper, we study the uniform metric case of the \GKS. We give a
randomized $O(k^2 \cdot \log k)$-competitive algorithm improving over the
$O(k^3 \cdot \log k)$ bound by Bansal et al.~\cite{BaElKN18}.

To this end, we first define an elegant abstract online problem: a~\emph{Hydra
game} played by an~online algorithm against an adversary on an unweighted
tree. We present the problem along with a randomized, low-cost online
algorithm \ouralg in \cref{sec:hydra}. We defer a formal definition
of the \GKS to \cref{sec:gks-preliminaries}. Later, in
\cref{sec:phase-based} and \cref{sec:spaces}, we briefly
sketch the structural claims concerning the \GKS given by Bansal et
al.~\cite{BaElKN18}. Using this structural information, in
\cref{sec:reduction}, we link the \GKS to the Hydra game: we show
that a (randomized) algorithm of total cost~$R$ for the Hydra game on a
specific tree (called \emph{factorial tree}) implies a~(randomized)
$(R+1)$-competitive solution for the \GKS. This, along with the performance
guarantees of \ouralg given in \cref{sec:hydra}, yields the desired
competitiveness bound. We remark that while the explicit definition of the
Hydra game is new, the algorithm of Bansal et al.~\cite{BaElKN18} easily
extends to its framework.

Finally, in \cref{sec:lower}, we give an explicit lower bound
construction for the \GKS, which does not use a reduction to the metrical task system
problem, hereby improving the bound from $\Omega(k / \log k)$ to $\Omega(k)$.

\section{Hydra Game}
\label{sec:hydra}

The Hydra game\footnote{This is a work of science. Any resemblance of the
process to the decapitation of a mythical many-headed serpent-shaped monster is
purely coincidental.} is played between an online algorithm and an adversary on
a fixed unweighted tree \tree, known to the algorithm in advance. The nodes of
\tree have states which change throughout the game: Each node can be either
\emph{asleep}, \emph{alive} or \emph{dead}. Initially, the root~\treeroot is
alive and all other nodes are asleep.  At all times, the following invariant is
preserved: all ancestors of alive nodes are dead and all their descendants are
asleep. In a~single step, the adversary picks a single alive node $w$, kills it
(changes its state to dead) and makes all its (asleep) children alive. Note that
such adversarial move preserves the invariant above.

An algorithm must remain at some alive node (initially, it is at the root
\treeroot). If~an~algorithm is at a node $w$ that has just been killed, it has
to move to any still alive node $w'$ of its choice. For such movement it pays
$\dist(w,w')$, the length of the shortest path between $w$ and $w'$ in the
tree $T$. The game ends when all nodes except one (due to the invariant, it has to
be an~alive leaf) are dead. Unlike many online problems, here our sole goal is to minimize 
the total (movement) cost of an online algorithm (i.e., without comparing it
to the cost of the offline optimum). 

This game is not particularly interesting in the deterministic setting: As an
adversary can always kill the node where a deterministic algorithm resides,
the algorithm has to visit all but one nodes of tree $T$, thus paying
$\Omega(|\tree|)$. On the other hand, a trivial DFS traversal of tree
\tree has the cost of $O(|\tree|)$. Therefore, we focus on randomized
algorithms and assume that the adversary is oblivious: it knows an online
algorithm, but not the random choices made by it thus far.


\subsection{Randomized Algorithm Definition}
\label{sec:ouralg}
It is convenient to describe our randomized algorithm \ouralg as maintaining
a~probability distribution $\prob$ over set of nodes, where for any node $u$,
$\prob(u)$ denotes the probability that \ouralg is at $u$. We require that
$\prob(u) = 0$ for any non-alive node $u$. Whenever \ouralg decreases the
probability at a given node $u$ by $p$ and increases it at another node $w$ by
the same amount, we charge cost $p \cdot \dist(u,w)$ to \ouralg. By a
straightforward argument, one can convert such description into a more
standard, ``behavioral'' one, which describes randomized actions conditioned
on the current state of an algorithm, and show that the costs of both
descriptions coincide. We present the argument in \cref{app:app} for
completeness.

At any time during the game, for any node $u$ from tree $T$, $\rank(u)$ denotes
the number of non-dead (i.e., alive or asleep) leaves in the subtree rooted at
$u$.  As \ouralg knows tree \tree in advance, it knows node ranks as well.
Algorithm \ouralg maintains $\prob$ that is distributed over all alive nodes
proportionally to their ranks.  As all ancestors of an alive node are dead and
all its descendants are asleep, we have $\prob(u) = \rank(u)/\rank(\treeroot)$
if $u$ is alive and $\prob(u) = 0$ otherwise.  In particular, at the beginning
$\prob$ is $1$ at the root and $0$ everywhere else.

While this already defines the algorithm, we still discuss its behavior when an
alive node $u$ is killed by the adversary.  By \ouralg definition, we can think
of the new probability distribution $\prob'$ as obtained from $\prob$ in the
following way. First, \ouralg sets $\prob'(u) = 0$. Next, the probability
distribution at other nodes is modified as follows.
\begin{description}
\item[Case 1.] Node $u$ is not a leaf. \ouralg
    distributes the probability of $u$ among all (now alive) children of~$u$
    proportionally to their ranks, i.e., sets 
    $\prob'(w) = (\rank(w) / \rank(u)) \cdot \prob(u)$ for each child $w$ of $u$.
    
\item[Case 2.] Node $u$ is a leaf.  Note that there were some other non-dead
    leaves, as otherwise the game would have ended before this step, and
    therefore $\prob(u) \leq 1/2$.
    \ouralg distributes~$\prob(u)$ among all other nodes,
    scaling the probabilities of the remaining nodes up by a~factor of
    $1/(1-\prob(u))$. That is, it sets $\prob'(w) = \prob(w)/(1-\prob(u))$ for
    any node $w$.
\end{description}

Note that in either case, $\prob'$ is a valid probability distribution, i.e.,
all probabilities are non-negative and sum to $1$. Moreover, $\prob'$ is
distributed over alive nodes proportionally to their new ranks, and is equal
to zero at non-alive nodes.

\begin{observation}
\label{obs:leaf_prob}
At any time, the probability of an alive leaf $u$ is exactly $\prob(u) = 1/ \rank(\treeroot)$.
\end{observation}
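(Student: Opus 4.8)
The plan is to prove this by a straightforward induction on the number of adversarial moves (steps). The invariant I will track is precisely the statement from the preceding paragraph: at every point in time, $\prob$ is supported on the alive nodes, and $\prob(u) = \rank(u)/\rank(\treeroot)$ for every alive node $u$. Since an alive leaf $u$ satisfies $\rank(u) = 1$ (its subtree is just itself, and it is non-dead), this invariant immediately yields $\prob(u) = 1/\rank(\treeroot)$, which is the claim. The one subtlety is that $\rank(\treeroot)$ changes over the course of the game, so the claimed probability is not a fixed constant but rather tracks the current number of non-dead leaves in the whole tree.

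First I would check the base case: initially $\prob(\treeroot) = 1$, the root is the only alive node, and $\rank(\treeroot)$ equals the total number of leaves, so the invariant holds vacuously except at the root, where indeed $\rank(\treeroot)/\rank(\treeroot) = 1$. Then for the inductive step I would consider an adversarial move killing an alive node $u$, and verify that the update rules (Case 1 and Case 2) restore the invariant with respect to the \emph{new} ranks. In Case 1 ($u$ not a leaf), killing $u$ leaves $\rank$ unchanged at every node (no leaf dies), and the probability $\prob(u) = \rank(u)/\rank(\treeroot)$ is split among the children $w$ in proportion to $\rank(w)$; since $\sum_{w} \rank(w) = \rank(u)$ (the children's subtrees partition the non-dead leaves under $u$), each newly alive child gets $\prob'(w) = (\rank(w)/\rank(u)) \cdot \rank(u)/\rank(\treeroot) = \rank(w)/\rank(\treeroot)$, as required, and all other alive nodes are untouched. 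In Case 2 ($u$ a leaf), exactly one non-dead leaf disappears, so $\rank(\treeroot)$ decreases by $1$ and so does $\rank(v)$ for every ancestor $v$ of $u$; here I would use the bookkeeping that before the move $\prob(u) = 1/\rank(\treeroot)$ (by the invariant applied to the leaf $u$), hence the rescaling factor $1/(1-\prob(u))$ equals $\rank(\treeroot)/(\rank(\treeroot)-1)$, i.e. exactly $\rank(\treeroot)_{\text{old}}/\rank(\treeroot)_{\text{new}}$, which is precisely the correction needed to turn $\rank(w)/\rank(\treeroot)_{\text{old}}$ into $\rank(w)/\rank(\treeroot)_{\text{new}}$ for every remaining alive node $w$ (noting that for non-ancestors of $u$, $\rank(w)$ is unchanged, but the rescaling still gives the right answer because it is a genuine probability renormalization and the invariant was proportional).

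I expect the main (and only real) obstacle is making the Case 2 accounting fully rigorous: one must be careful that the rescaling by $1/(1-\prob(u))$ simultaneously does the right thing for two kinds of remaining alive nodes — ancestors of $u$, whose rank drops by one, and non-ancestors, whose rank is unchanged — and reconcile this with the fact that the update rule applies a \emph{uniform} multiplicative factor to all of them. The cleanest way is to observe that after the update $\prob'$ is again a probability distribution supported on the alive nodes and proportional to their new ranks: it is proportional to the old ranks (inherited from the inductive hypothesis, since a uniform scaling preserves proportionality), and the old ranks agree with the new ranks up to the additive change at ancestors — but in fact the slick argument is just that $\prob'$ is a probability distribution proportional to $\{\rank(w)\}_{w \text{ alive}}$, and there is a unique such distribution, namely $w \mapsto \rank(w)/\sum_{w'} \rank(w') = \rank(w)/\rank(\treeroot)$, since the alive nodes' subtrees partition the non-dead leaves of the whole tree. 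Applying this to an alive leaf closes the proof.
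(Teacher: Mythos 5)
Your proof is correct and follows the same route as the paper, which defines \ouralg as maintaining the invariant $\prob(u)=\rank(u)/\rank(\treeroot)$ for alive nodes, notes (without detailed verification) that both update cases preserve it, and reads off the observation from $\rank(u)=1$ for an alive leaf; you simply carry out the inductive verification explicitly. The one subtlety you flag in Case 2 — remaining alive nodes that are ancestors of the killed leaf $u$, whose rank would drop — in fact cannot arise, since the game's invariant guarantees all ancestors of the alive node $u$ are already dead, so every surviving alive node keeps its old rank and the uniform rescaling by $\rank(\treeroot)/(\rank(\treeroot)-1)$ is exactly right.
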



\subsection{Analysis}

For the analysis, we need a few more definitions. We denote the height and the
number of the leaves of tree \tree by $\height$ and $\leaves$, respectively.
Let $\level(u)$ denote the height of the subtree rooted at $u$, where leaves
are at level $0$. Note that $\height = \level(\treeroot)$. 

To bound the cost of \ouralg, we define a potential $\Phi$, which is a
function of the current state of all nodes of $T$ and the current probability
distribution $\prob$ of \ouralg. We show that $\Phi$~is initially $O(\height
\cdot (1+\log \leaves))$, is always non-negative, and the cost of each
\ouralg's action can be covered by the decrease of $\Phi$. This will show that
the total cost of \ouralg is at most the initial value of $\Phi$, i.e.,
$O(\height \cdot (1+\log \leaves))$.

Recall that $\prob(w) = 0$ for any non-alive node $w$ and that $\rank(u)$ is
the number of non-dead leaves in the subtree rooted at $u$.  Specifically,
$\rank(\treeroot)$ is the total number of non-dead leaves in $T$. The potential is
defined as
\begin{equation}
   \Phi = 4 \cdot \height \cdot H(\rank(\treeroot)) + \sum_{w\in T} \prob(w) \cdot \level(w)\,,
\end{equation}
where $H(n) = \sum_{i=1}^n 1/i$ is the $n$-th harmonic number.

\begin{lemma}\label{lem:PotUB}
At any time, $\Phi = O(\height \cdot (1 + \log \leaves))$.
\end{lemma}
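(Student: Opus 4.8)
The plan is to bound the two summands of $\Phi$ separately, relying on only two elementary facts: that $\prob$ is a probability distribution supported on the nodes of \tree, and that $\rank(\treeroot)$ never exceeds the total number of leaves \leaves.

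First I would handle the harmonic term $4\cdot\height\cdot H(\rank(\treeroot))$. By definition $\rank(\treeroot)$ counts the non-dead leaves of \tree, so $1 \le \rank(\treeroot) \le \leaves$ at all times. Since $H(n) = \sum_{i=1}^{n} 1/i \le 1 + \ln n$, we get $H(\rank(\treeroot)) \le 1 + \ln \leaves = O(1 + \log \leaves)$, hence this term is $O(\height \cdot (1+\log\leaves))$.

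Next I would bound the weighted-level term $\sum_{w\in T}\prob(w)\cdot\level(w)$. Every level satisfies $0 \le \level(w) \le \level(\treeroot) = \height$, and, as noted right after the case analysis in \cref{sec:ouralg}, $\prob$ is a valid probability distribution, i.e.\ $\prob(w)\ge 0$ and $\sum_{w\in T}\prob(w) = 1$. Therefore $\sum_{w\in T}\prob(w)\cdot\level(w) \le \height\cdot\sum_{w\in T}\prob(w) = \height$. Adding the two bounds gives $\Phi \le 4\height(1+\ln\leaves) + \height = O(\height\cdot(1+\log\leaves))$.

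I do not expect any genuine obstacle; the estimate is essentially just ``probabilities sum to one, levels are at most $\height$, and a rank is at most $\leaves$''. The only points worth a sentence of care are that the invariants on $\prob$ (nonnegativity, summing to one, vanishing on non-alive nodes) indeed persist after every adversarial move and every action of \ouralg, which was already established when the algorithm was defined, and that $\rank(\treeroot)\ge 1$ throughout so that $H(\rank(\treeroot))$ is well defined — true, since the game terminates only once a single alive leaf remains.
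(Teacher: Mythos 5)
Your proof is correct and follows the same route as the paper's: bound the harmonic term via $\rank(\treeroot)\le\leaves$ and $H(n)=O(1+\log n)$, and bound the second term by $\height$ since it is a convex combination of levels in $[0,\height]$. The extra remarks about the invariants on $\prob$ and $\rank(\treeroot)\ge 1$ are fine but not needed beyond what the paper already establishes.
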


\begin{proof}
Since $\rank(\treeroot) \leq \leaves$ at all times, the first summand of
$\Phi$ is $O(\height \cdot \log \leaves)$. The second summand of $\Phi$ is a
convex combination of node levels, which range from $0$ to $\height$, and
is thus bounded by $\height$.
\end{proof}

\begin{lemma}
\label{lem:algvsPot}
Fix any step in which an adversary kills a node $u$ and in result \ouralg
changes the probability distribution from $\prob$ to $\prob'$. Let $\Delta \ouralg$ be 
the cost incurred in this step by \ouralg and let $\Delta \Phi$ be 
the resulting change in the potential $\Phi$. Then, $\Delta \Phi \leq - \Delta \ouralg$.
\end{lemma}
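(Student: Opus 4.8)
The plan is to analyze the two cases of the algorithm's response separately, since the cost and potential change have quite different structure in each. In both cases, the key is to split $\Delta\Phi$ into the change of the ``harmonic'' term $4\height\cdot H(\rank(\treeroot))$ and the change of the ``level'' term $\sum_w \prob(w)\cdot\level(w)$, and to show that together they dominate $\Delta\ouralg$.

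**First, Case 1 (killed node $u$ is an internal node).** Here $\rank(\treeroot)$ does not change, because killing an internal node does not remove any non-dead leaf — its asleep children simply become alive. So the harmonic term contributes nothing, and all the work is done by the level term. The algorithm moves the mass $\prob(u)$ from $u$ down to its children $w$, each receiving $(\rank(w)/\rank(u))\cdot\prob(u)$. Since $u$ and its children are adjacent in the tree, the cost is $\Delta\ouralg = \sum_{w} (\rank(w)/\rank(u))\cdot\prob(u)\cdot 1 = \prob(u)$ (using $\sum_w \rank(w) = \rank(u)$ for an internal node). For the potential: the level term loses $\prob(u)\cdot\level(u)$ and gains $\sum_w (\rank(w)/\rank(u))\cdot\prob(u)\cdot\level(w)$. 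Since every child $w$ satisfies $\level(w)\le \level(u)-1$, the gain is at most $(\level(u)-1)\cdot\prob(u)$, so the net change of the level term is at most $-\prob(u) = -\Delta\ouralg$. This case is clean.

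**Next, Case 2 (killed node $u$ is a leaf).** Now $\rank(\treeroot)$ drops by exactly $1$, from some value $n$ to $n-1$, and by \cref{obs:leaf_prob} we have $\prob(u) = 1/n$. The harmonic term changes by $4\height\cdot(H(n-1)-H(n)) = -4\height/n$. The algorithm redistributes $\prob(u)=1/n$ over the remaining nodes by scaling everything up by $1/(1-1/n) = n/(n-1)$. The cost $\Delta\ouralg$ is at most (the mass moved) times (the diameter of the tree), which is at most $\prob(u)\cdot 2\height = 2\height/n$ — more carefully, moving $\prob(u) = 1/n$ total mass, each unit traveling distance at most $2\height$. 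So $\Delta\ouralg \le 2\height/n$, and the harmonic term already pays for this with room to spare. The remaining concern is the level term: scaling $\prob$ up by $n/(n-1)$ increases $\sum_w \prob(w)\cdot\level(w)$ by a factor $n/(n-1)$, and since this sum was at most $\height$, the increase is at most $\height/(n-1)$. We need the leftover $-4\height/n + 2\height/n = -2\height/n$ from the harmonic term to cover this $+\height/(n-1)$ increase. Since $u$ is a leaf and there are other non-dead leaves, $n\ge 2$, so $\height/(n-1)\le 2\height/n$ exactly when $n\ge 2$; thus $-2\height/n + \height/(n-1) \le 0$, and overall $\Delta\Phi \le -\Delta\ouralg$.

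**The main obstacle** is the bookkeeping in Case 2: one must be careful that the constant $4$ in front of the harmonic term is actually large enough to simultaneously pay for the movement cost (which needs a ``$2$'') and absorb the blow-up of the level term (which needs another ``$2$'' worth, via the $n\ge 2$ bound giving $\height/(n-1)\le 2\height/n$). It is worth double-checking the distance bound $\dist(u,w')\le 2\height$ for the worst-case move of a leaf to another still-alive node, and the fact that every alive node sits at depth at most $\height$ from the root so the path through the root has length at most $2\height$; a tighter per-node accounting is possible but this crude bound suffices. Case 1 presents no real difficulty beyond noting that $\rank(\treeroot)$ is invariant under killing an internal node.
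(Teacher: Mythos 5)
Your proof is correct and follows essentially the same route as the paper: Case 1 is identical, and in Case 2 both arguments use $\prob(u)=1/\rank(\treeroot)$, the distance bound $2\height$, the harmonic-term drop of $4\height/\rank(\treeroot)$, and the bound $1/(\rank(\treeroot)-1)\le 2/\rank(\treeroot)$ on the level-term blow-up. The only difference is cosmetic bookkeeping in how the level-term increase is bounded.
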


\begin{proof}
We denote the ranks before and after the adversarial event by $\rank$ and $\rank'$, respectively.
We consider two cases depending on the type of $u$.

\begin{description}
\item[Case 1.]
The killed node $u$ is an internal node.
In this case, $\Delta \ouralg = \prob(u)$ as \ouralg simply moves the total probability 
of $\prob(u)$ along a distance of one (from $u$ to its children).
As $\rank'(\treeroot) = \rank(\treeroot)$, the first summand of $\Phi$ remains unchanged.
Let $C(u)$ be the set of children of $u$. Then,
\begin{align*}
\Delta \Phi 
    & = \sum_{w \in T} (\prob'(w) - \prob(w)) \cdot \level(w) = - \prob(u) \cdot \level(u) 
        + \sum_{w \in C(u)} \prob'(w) \cdot \level(w) \\
    & \leq - \prob(u) \cdot \level(u) + \sum_{w \in C(u)} \prob'(w) \cdot (\level(u)-1) \\
    & = - \prob(u) \cdot \level(u) + \prob(u) \cdot (\level(u) - 1) = - \Delta \ouralg \,,
\end{align*}
where the inequality holds as level of a node is smaller than the level of its parent and the penultimate equality follows as the whole probability mass at $u$ 
is distributed to its children.

\item[Case 2.]
The killed node $u$ is a leaf. It is not the last alive node, as in such case
the game would have ended before, i.e., it holds that $\rank(\treeroot) \geq
2$. \ouralg moves the probability of $\prob(u) =
1/\rank(\treeroot)$ (cf.~\cref{obs:leaf_prob}) along a distance of
at most $2 \cdot \height$, and thus $\Delta \ouralg \leq 2 \cdot \height /
\rank(\treeroot)$.

Furthermore, for any $w \neq u$, $\prob'(w) = \prob(w)/(1-\prob(u))$. Using
$\prob(u) = 1/\rank(\treeroot)$, we infer that the probability at a node $w
\neq u$ increases by
\begin{align}
    \nonumber
    \prob'(w) - \prob(w) 
    = &\; \left(\frac{1}{1-\prob(u)} - 1 \right) \cdot \prob(w) 
    = \frac{\prob(u)}{1-\prob(u)} \cdot \prob(w) \\
    \label{eq:phi_prob_change}            
    = &\; \frac{1}{\rank(\treeroot)-1} \cdot \prob(w)
    \leq \frac{2}{\rank(\treeroot)} \cdot \prob(w) \,,
\end{align}
where the last inequality follows as $\rank(\treeroot) \geq 2$.

Using \eqref{eq:phi_prob_change} and the relation $\rank'(\treeroot) =
\rank(\treeroot) - 1$ (the number of non-dead leaves decreases by $1$), we
compute the change of the potential:
\begin{align*}
\Delta \Phi 
    & = 4 \cdot \height \cdot \left( H(\rank'(\treeroot)) - H(\rank(\treeroot)) \right)
        + \sum_{w \in T} (\prob'(w) - \prob(w)) \cdot \level(w) \\
    & = -\frac{4 \cdot \height}{\rank(\treeroot)} 
        + (\prob'(u) - \prob(u)) \cdot \level(u) 
        + \sum_{w \neq u} (\prob'(w) - \prob(w)) \cdot \level(w) \\
    & \leq -\frac{4 \cdot \height}{\rank(\treeroot)} 
        + \sum_{w \neq u} \frac{2}{\rank(\treeroot)} \cdot \prob(w) \cdot \height 
     \leq -\frac{2 \cdot \height}{\rank(\treeroot)} 
     \leq -\Delta \ouralg \,.
\end{align*}
In the first inequality, we used that $\level(u) = 0$ and $\level(w) \leq \height$ for any $w$.
\end{description}
Summing up, we showed that $\Delta \Phi \leq - \Delta \ouralg$ in both cases.
\end{proof}

\begin{theorem}\label{thm:algcost}
For the Hydra game played on any tree \tree of height $\height$ and $\leaves$
leaves, the total cost of \ouralg is at most $O(\height \cdot (1+\log
\leaves))$.
\end{theorem}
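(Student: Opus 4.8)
The plan is to combine the two preceding lemmas via a standard telescoping argument over the sequence of adversarial steps. Observe first that $\Phi$ is always nonnegative: the term $4\cdot\height\cdot H(\rank(\treeroot))$ is nonnegative since harmonic numbers are nonnegative, and the term $\sum_{w\in T}\prob(w)\cdot\level(w)$ is a nonnegative combination of nonnegative levels. So at all times $\Phi \geq 0$.

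Next I would enumerate the steps of the game as $t = 1, 2, \dots, m$ (the game terminates after finitely many steps, as each adversarial move strictly decreases the number of alive-or-asleep nodes, or more simply strictly decreases $\rank(\treeroot)$ once leaves start dying and otherwise strictly increases the number of dead nodes). Let $\Phi_0$ be the initial potential and $\Phi_t$ the potential after step $t$, and let $\Delta\ouralg_t$ be the cost \ouralg incurs in step $t$. Between steps the potential does not change, since only adversarial kills and the induced probability redistributions alter the state or the distribution $\prob$; \ouralg moves probability mass only in direct response to a kill.

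By \cref{lem:algvsPot}, for every step $t$ we have $\Phi_t - \Phi_{t-1} = \Delta\Phi_t \leq -\Delta\ouralg_t$. Summing over $t = 1,\dots,m$ telescopes the left-hand side to $\Phi_m - \Phi_0$, giving
\[
  \Phi_m - \Phi_0 \;\leq\; -\sum_{t=1}^{m} \Delta\ouralg_t \;=\; -\,\mathrm{cost}(\ouralg).
\]
Rearranging and using $\Phi_m \geq 0$ yields $\mathrm{cost}(\ouralg) \leq \Phi_0 - \Phi_m \leq \Phi_0$. Finally, \cref{lem:PotUB} applied at the initial state bounds $\Phi_0 = O(\height\cdot(1+\log\leaves))$, which completes the proof.

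There is no real obstacle here; all the substantive work — the cost-versus-potential inequality in both the internal-node and leaf-kill cases, and the uniform upper bound on $\Phi$ — has already been done in \cref{lem:algvsPot} and \cref{lem:PotUB}. The only points needing a word of care are the nonnegativity of $\Phi$ and the fact that the game halts after finitely many steps so that the telescoping sum is well defined.
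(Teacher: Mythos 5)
Your proposal is correct and is exactly the paper's argument: the paper likewise concludes that, by non-negativity of $\Phi$ and \cref{lem:algvsPot}, the total cost of \ouralg is bounded by the initial potential, which \cref{lem:PotUB} bounds by $O(\height \cdot (1+\log \leaves))$. You merely spell out the telescoping and finite-termination details that the paper leaves implicit.
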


\begin{proof}
Let $\Phi_\mathrm{B}$ denote the initial value of $\Phi$. By non-negativity of $\Phi$
and \cref{lem:algvsPot}, it holds that the total cost of \ouralg is at
most $\Phi_\mathrm{B}$. The latter amount is at most $O(\height \cdot (1+\log
\leaves))$ by \cref{lem:PotUB}.
\end{proof}

Although \ouralg and \cref{thm:algcost} may seem simple, when applied
to appropriate trees, they yield improved bounds for the \GKS in uniform
metrics, as shown in the next section.

\section{Improved Algorithm for Generalized k-Server Problem}
\label{sec:k-server}

In this part, we show how any solution for the Hydra game on a specific tree
(defined later) implies a solution to the \GKS in uniform metrics. This will
yield an $O(k^2 \log k)$-competitive randomized algorithm for the \GKS,
improving the previous bound of $O(k^3 \cdot \log k)$~\cite{BaElKN18}. We note
that this reduction is implicit in the paper of Bansal et al.~\cite{BaElKN18}, so our
contribution is in formalizing the Hydra game and solving it more efficiently.


\subsection{Preliminaries}
\label{sec:gks-preliminaries}

The generalized $k$-server problem in uniform metrics is formally defined as
follows. The offline part of the input comprises $k$ uniform metric spaces
$M_1,\dots,M_k$. The metric $M_i$ has $n_i \geq 2$ points, the distance
between each pair of its points is $1$. There are $k$ servers denoted
$s_1,\dots,s_k$, the server $s_i$ starts at some fixed point in $M_i$
and always remains at some point of $M_i$.

The online part of the input is a sequence of requests, each request being a
$k$-tuple $(r_1,\dots,r_k) \in \prod_{i=1}^k M_i$. To service a request, an
algorithm needs to move its servers, so that at least one server $s_i$ ends at
the request position $r_i$. Only after the current request is serviced, an
online algorithm is given the next one.

The cost of an algorithm \ALG on input $I$, denoted $\ALG(I)$, is the total
distance traveled by all its $k$ servers.  We say that a randomized online
algorithm \ALG is $\beta$-competitive if there exists a~constant $\gamma$, such
that for any input $I$, it holds that $\E[\ALG(I)] \leq \beta \cdot \OPT(I) +
\gamma$, where the expected value is taken over all random choices of \ALG, and
where $\OPT(I)$ denotes the cost of an~optimal \emph{offline} solution for input
$I$. The constant $\gamma$ may be a function of $k$, but it cannot depend on an
online part of the input.


\subsection{Phase-Based Approach}
\label{sec:phase-based}

We start by showing how to split the sequence of requests into phases. To this
end, we need a few more definitions. A (server) \emph{configuration} is a
$k$-tuple $c = (c_1,\dots,c_k) \in \prod_{i=1}^k M_i$, denoting positions of
respective servers. For a request $r = (r_1,\dots,r_k) \in \prod_{i=1}^k M_i$,
we define the set of \emph{compatible} configurations $\comp(r) =
\{ (c_1,\dots,c_k) : \exists_i\,c_i = r_i \}$, i.e., the set of all
configurations that can service the request $r$ without moving a~server. Other
configurations we call \emph{incompatible} with $r$.

An input is split into phases, with the first phase starting with the
beginning of an~input. The phase division process described below is
constructed to ensure that \OPT pays at least~$1$ in any phase, perhaps except
the last one. At the beginning of a phase, all configurations are
\emph{phase-feasible}. Within a~phase, upon a request $r$, all configurations
incompatible with~$r$ become \emph{phase-infeasible}. The phase ends once all
configurations are phase-infeasible; if this is not the end of the input, the
next phase starts immediately, i.e., all configurations are restored to the
phase-feasible state before the next request.  Note that the description above
is merely a way of splitting an~input into phases and marking configurations
as phase-feasible and phase-infeasible. The actual description of an online
algorithm will be given later.

Fix any finished phase and any configuration $c$ and consider an algorithm
that starts the phase with its servers at configuration $c$. When
configuration $c$ becomes phase-infeasible, such algorithm is forced to move
and pay at least $1$. As each configuration eventually becomes
phase-infeasible in a finished phase, any algorithm (even \OPT) must pay at
least $1$ in any finished phase. Hence, if the cost of a phase-based algorithm
for servicing requests of a single phase can be bounded by $f(k)$, the
competitive ratio of this algorithm is then at most $f(k)$.


\subsection{Configuration Spaces}
\label{sec:spaces}

Phase-based algorithms that we construct will not only track the set of
phase-feasible configurations, but they will also group these configurations in
certain sets, called \emph{configuration spaces}.

To this end, we introduce a special \emph{wildcard} character \wc.
Following~\cite{BaElKN18}, for any $k$-tuple $q = (q_1,\dots,q_k) \in
\prod_{i=1}^k (M_i \cup \{ \wc \})$, we define a \emph{(configuration) space}
$\cspace{q} = \{ (c_1,\dots,c_k) \in \prod_{i=1}^k M_i : \forall_i \; c_i = q_i
\vee q_i = \wc \}$. A coordinate with $q_i = \wc$ is called \emph{free} for the
configuration space $\cspace{q}$. That is, $\cspace{q}$ contains all
configurations that agree with $q$ on all non-free coordinates. 

The number of free coordinates in $q$ defines the \emph{dimension} of
$\cspace{q}$ denoted $\dim(\cspace{q})$. Observe that the $k$-dimensional space
$\cspace{(\wc,\dots,\wc)}$ contains all configurations. If tuple~$q$ has no \wc
at any position, then $\cspace{q}$ is $0$-dimensional and contains only
(configuration) $q$. The following lemma, proven by Bansal et
al.~\cite{BaElKN18}, follows immediately from the definition of configuration
spaces.

\begin{lemma}[Lemma 3.1 of \cite{BaElKN18}]
\label{lem:subspaces}
Let $\cspace{q}$ be a $d$-dimensional configuration space (for some $d \geq
0$) whose all configurations are phase-feasible. Fix a request $r$. If there
exists a~configuration in $\cspace{q}$ that is not compatible with $r$, then
there exist $d$ (not necessarily disjoint) subspaces
$\cspace{q_1},\dots,\cspace{q_d}$, each of dimension $d - 1$, such that
$\bigcup_i \cspace{q_i} = \cspace{q} \cap \comp(r)$. Furthermore, for all $i$,
the $k$-tuples $q_i$ and $q$ differ exactly at one position.
\end{lemma}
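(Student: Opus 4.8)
The plan is to carry out a coordinate-wise analysis isolating which positions can certify compatibility with the request $r$. Let $F \subseteq \{1,\dots,k\}$ be the set of free coordinates of $\cspace{q}$, so that $|F| = d$ and the subspaces we build will be naturally indexed by the elements of $F$. The first, and only conceptually essential, step is the following observation: every \emph{non-free} coordinate $i \notin F$ must satisfy $q_i \neq r_i$. Indeed, if $q_i = r_i$ for some non-free $i$, then every configuration $c \in \cspace{q}$ has $c_i = q_i = r_i$ and hence lies in $\comp(r)$, so $\cspace{q} \subseteq \comp(r)$ and no configuration of $\cspace{q}$ is incompatible with $r$ — contradicting the hypothesis. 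Consequently, for $c \in \cspace{q}$, compatibility with $r$ can be witnessed only through a free coordinate: $c \in \comp(r)$ if and only if $c_i = r_i$ for some $i \in F$.

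Next I would construct the subspaces. For each $i \in F$, let $q_i$ be the $k$-tuple obtained from $q$ by replacing the wildcard $\wc$ in position $i$ by $r_i \in M_i$ and leaving all other positions unchanged. Then $q_i$ agrees with $q$ everywhere except position $i$, so the two tuples differ at exactly one position, and $\cspace{q_i}$ has one fewer free coordinate, i.e., $\dim(\cspace{q_i}) = d-1$. (If $d = 0$ then $F = \emptyset$, the unique configuration $q$ is the incompatible one, $\cspace{q} \cap \comp(r) = \emptyset$, and the empty union of subspaces equals $\emptyset$, so the claim holds vacuously; from now on assume $d \geq 1$.)

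It then remains to verify the set equality $\bigcup_{i \in F} \cspace{q_i} = \cspace{q} \cap \comp(r)$. For the inclusion $\subseteq$: any $c \in \cspace{q_i}$ agrees with $q$ on all coordinates outside $i$ and has $c_i = r_i$, hence $c \in \cspace{q}$ and $c \in \comp(r)$, so $\cspace{q_i} \subseteq \cspace{q} \cap \comp(r)$. For the inclusion $\supseteq$: given $c \in \cspace{q} \cap \comp(r)$, the observation above yields a free coordinate $i \in F$ with $c_i = r_i$; since $c$ already agrees with $q$ (equivalently, with $q_i$) on every other coordinate, we conclude $c \in \cspace{q_i}$.

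I do not expect a genuine obstacle here — the argument is bookkeeping — but the step that must not be skipped is the opening observation, since it is the unique place where the hypothesis "some configuration of $\cspace{q}$ is incompatible with $r$" is used; without it a non-free coordinate could already force compatibility, and the free-coordinate subspaces would fail to cover $\cspace{q} \cap \comp(r)$. I would also be careful to define $q_i$ as a modification of $q$ rather than of $r$, so that both the "differ at exactly one position" clause and the containment $\cspace{q_i} \subseteq \cspace{q}$ are immediate.
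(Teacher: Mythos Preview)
Your argument is correct and is exactly the natural unpacking of the definitions. The paper itself does not give a proof of this lemma: it is quoted from Bansal et al.\ with the remark that it ``follows immediately from the definition of configuration spaces,'' and your coordinate-wise construction (pin each free coordinate $i$ of $q$ to $r_i$) is precisely that immediate argument.
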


Using the lemma above, we may describe a way for an online algorithm to keep
track of all phase-feasible configurations. To this end, it maintains a set
$\SET$ of (not necessarily disjoint) configuration spaces, such that their
union is exactly the set of all phase-feasible configurations. We call spaces
from $\SET$ \emph{alive}.

At the beginning, $\SET = \{ \cspace{(\wc,\dots,\wc)} \}$. Assume now that
a~request $r$ makes some configurations from a $d$-dimensional space
$\cspace{q} \in \SET$ phase-infeasible. (A request may affect many spaces from
$\SET$; we apply the described operations to each of them sequentially in an
arbitrary order.) In such case, $\cspace{q}$ stops to be alive, it is removed
from $\SET$ and till the end of the phase it will be called \emph{dead}. Next,
we apply \cref{lem:subspaces} to $\cspace{q}$, obtaining 
$d$~configuration spaces $\cspace{q_1}, \dots, \cspace{q_d}$, such that their
union is $\cspace{q} \cap \comp(r)$, i.e., contains all those configurations
from $\cspace{q}$ that remain phase-feasible. We make all spaces
$\cspace{q_1}, \dots, \cspace{q_d}$ alive and we insert them into $\SET$.
(Note that when $d = 0$, set $\cspace{q}$ is removed from \SET, but no space
is added to it.) This way we ensure that the union of spaces from $\SET$ remains
equal to the set of all phase-feasible configurations. Note that when a phase
ends, $\SET$ becomes empty. We emphasize that the evolution of set $\SET$
within a phase depends only on the sequence of requests and not on the
particular behavior of an online algorithm.


\subsection{Factorial Trees: From Hydra Game to Generalized k-Server}
\label{sec:reduction}

Given the framework above, an online algorithm may keep track of the set of
alive spaces $\SET$, and at all times try to be in a configuration from some
alive space. If this space becomes dead, an algorithm changes its configuration
to any configuration from some other alive space from $\SET$. 

The crux is to choose an appropriate next alive space. To this end, our
algorithm for the \GKS will internally run an instance of the Hydra game (a new
instance for each phase) on a special tree, and maintain a mapping from alive
and dead spaces to alive and dead nodes in the tree. Moreover, spaces that are
created during the algorithm runtime, as described in
\cref{sec:spaces}, have to be dynamically mapped to tree nodes that
were so far asleep.

In our reduction, we use a \emph{$k$-factorial} tree. It has height $k$ (the
root is on level $k$ and leaves on level $0$). Any node on level $d$ has exactly
$d$ children, i.e., the subtree rooted at a $d$-level node has $d!$ leaves,
hence the tree name. On the $k$-factorial tree, the total cost of \ouralg is
$O(k \cdot (1+\log k!)) = O(k^2 \cdot \log k)$. We now show that this implies an
improved algorithm for the \GKS.

\begin{theorem}
\label{thm:reduction}
If there exists a (randomized) online algorithm \alghyd for the Hydra game on
the $k$-factorial tree of total (expected) cost $R$, then there exists a
(randomized) $(R+1)$-competitive online algorithm \alggks for the \GKS in
uniform metrics.
\end{theorem}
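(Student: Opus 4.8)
The plan is to build $\alggks$ by running, within each phase, a copy of $\alghyd$ on the $k$-factorial tree $\tree$, while maintaining a bijective-on-the-alive-part correspondence between tree nodes and the configuration spaces produced by the framework of \cref{sec:spaces}. The key structural match is that the recursion of \cref{lem:subspaces} mirrors the tree exactly: a $d$-dimensional phase-feasible space, when hit by a request that makes some of its configurations infeasible, spawns $d$ subspaces of dimension $d-1$; and a node on level $d$ of the $k$-factorial tree, when killed, wakes up exactly $d$ children, each on level $d-1$. So I would maintain an invariant: there is a map $\sigma$ from the currently alive nodes of $\tree$ onto the currently alive spaces in $\SET$ such that an alive node on level $d$ maps to an alive space of dimension $d$, and the root (level $k$) maps to the initial space $\cspace{(\wc,\dots,\wc)}$. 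Dead nodes correspond to spaces that have died, asleep nodes to spaces not yet created. Whenever a request kills space $\cspace{q}=\sigma(w)$, I feed the move "adversary kills $w$" to $\alghyd$; Case $d\ge 1$ of \cref{lem:subspaces} gives $d$ subspaces $\cspace{q_1},\dots,\cspace{q_d}$, which I assign to the $d$ freshly-woken children of $w$ in some fixed order, extending $\sigma$; Case $d=0$ is a leaf death with no children, matching the fact that a $0$-dimensional space spawns nothing. One subtlety: \cref{lem:subspaces}'s subspaces need not be disjoint and a configuration may lie in several alive spaces, but that is harmless — we only need $\bigcup \SET$ to equal the phase-feasible set, which the framework already guarantees.

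Next I would define the servers of $\alggks$ from the state of $\alghyd$. At any moment $\alghyd$ sits at some alive node $w$; $\alggks$ keeps its servers in configuration... well, since $\alghyd$ is randomized, more precisely: $\alggks$ maintains a random configuration that is always an element of $\sigma(w)$ for the current (random) position $w$ of $\alghyd$, say a fixed canonical representative of that space. When the phase begins we need the servers to be in $\cspace{(\wc,\dots,\wc)}$, which is everything, so no cost is incurred establishing this — we simply keep the servers where they were. Within a phase, $\alggks$ moves only when its current space dies: then $\alghyd$ moves from the killed node $w$ to some alive node $w'$ at tree-distance $\dist(w,w')$, and $\alggks$ must move its servers from the old canonical configuration of $\sigma(w)$ (or wherever they currently sit inside it) to the canonical configuration of $\sigma(w')$. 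The core quantitative claim is that this server move costs at most $\dist(w,w')$: this holds because along the tree path $w = u_0, u_1, \dots, u_\ell = w'$, consecutive nodes are parent/child, and by \cref{lem:subspaces} a child's $k$-tuple differs from its parent's in exactly one coordinate, so the corresponding canonical configurations differ in at most one coordinate, costing at most $1$ per tree edge; summing over the $\ell=\dist(w,w')$ edges gives the bound. Hence the per-phase server cost of $\alggks$ is at most the per-phase movement cost charged to $\alghyd$ in that phase's game.

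Then I would close the argument with the phase accounting from \cref{sec:phase-based}. Each phase of the input induces a complete play of the Hydra game on $\tree$ (the phase ends exactly when $\SET$ is empty, i.e. all spaces are dead, which corresponds to all nodes but one alive leaf being dead — the terminal state of the game), so the total (expected) movement cost of all the $\alghyd$ instances over the whole input is at most $R$ per phase; summing, $\E[\alggks(I)] \le R \cdot (\#\text{finished phases}) + (\text{cost of the last, possibly unfinished phase})$. In a finished phase $\alggks$ pays at most $R$ in expectation as shown, and in the last unfinished phase we can trivially bound its cost by $1$... more carefully: in the last phase the game may not complete, but we can still bound $\alggks$'s cost there by the total cost $\alghyd$ would incur completing it, which is at most $R$; alternatively we absorb the last phase into the additive constant. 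Since \cref{sec:phase-based} shows $\OPT$ pays at least $1$ in every finished phase, $\#\text{finished phases} \le \OPT(I)$, giving $\E[\alggks(I)] \le R\cdot \OPT(I) + O(R)$, i.e. $(R+1)$-competitiveness after folding the additive term (which depends only on $k$) into $\gamma$; with a little care — noting $\alggks$ pays nothing before its space first dies and the very first move of the game is free in the sense that $\alghyd$ starts at the root — one gets the clean constant $R+1$.

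The main obstacle I expect is making the correspondence $\sigma$ genuinely well-defined and consistent across the whole phase: \cref{lem:subspaces} is applied repeatedly and to many spaces per request (in arbitrary order), spaces are not disjoint, and one has to check that assigning the $d$ new subspaces to the $d$ newly-woken children (in \emph{any} fixed order) never conflicts with the tree invariant — that ancestors of alive nodes are dead and descendants asleep — and that a given configuration's membership in multiple alive spaces does not break the "$\alggks$ sits in $\sigma(w)$" coupling. The distance bound (one coordinate change per tree edge) is then a short induction, and the phase-counting is routine given \cref{sec:phase-based}.
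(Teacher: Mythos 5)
Your plan is essentially the paper's proof: the same phase structure, the same bijection between alive $d$-dimensional spaces and alive level-$d$ nodes of the $k$-factorial tree, the same translation of space-killing events into adversarial kills fed to \alghyd, and the same distance argument (one coordinate change per tree edge via \cref{lem:subspaces}, hence server cost at most $\dist(w,w')$ per replayed move). The concerns you raise at the end (non-disjoint subspaces, arbitrary processing order of multiple killed spaces per request) are handled in the paper exactly as you suggest: process the killings sequentially, and only require that the union of alive spaces equals the phase-feasible set.

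The one place your accounting slips is the source of the ``$+1$''. It does not come from the first move being free or from the unfinished last phase; it comes from the \emph{last request of every finished phase}, which makes \emph{all} remaining configurations phase-infeasible. At that moment $\SET$ is empty, there is no alive space to couple to, and \alggks must service the request by moving an arbitrary server at cost $1$ before restarting the game for the next phase. So a finished phase costs at most $R+1$ (not $R$) in expectation, and the bound is $\E[\alggks(I)] \leq (R+1)\cdot f \leq (R+1)\cdot\OPT(I) + (R+1)$, where $f$ is the number of phases. This is a minor repair, not a structural flaw.
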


\begin{proof}
Let $I$ be an input for the generalized $k$-server problem in uniform metric
spaces. \alggks~splits~$I$ into phases as described in
\cref{sec:phase-based} and, in each phase, it tracks the phase-feasible
nodes using set $\SET$ of alive spaces as described in
\cref{sec:spaces}. For each phase, \alggks~runs a new instance $I_H$ of
the Hydra game on a $k$-factorial tree $T$, translates requests from $I$ to
adversarial actions in $I_H$, and reads the answers of \alghyd executed on
$I_H$. At all times, \alggks~maintains a~(bijective) mapping from alive
(respectively, dead) $d$-dimensional configuration spaces to alive
(respectively, dead) nodes on the $d$-th level of the tree $T$. In particular,
at the beginning, the only alive space is the $k$-dimensional space
$\cspace{(\wc,\dots,\wc)}$, which corresponds to the tree root (on level $k$).
The configuration of \alggks will always be an~element of the space
corresponding to the tree node containing~\alghyd. More precisely, within each
phase, a request $r$ is processed in the following way by \alggks.
\begin{itemize}
\item Suppose that request $r$ does not make any configuration
phase-infeasible. In this case, \alggks~services $r$ from its current
configuration and no changes are made to \SET. Also no adversarial actions are
executed in the Hydra game.

\item Suppose that request $r$ makes some (but not all) configurations
phase-infeasible. We assume that this kills only one $d$-dimensional
configuration space~$\cspace{q}$. (If $r$ causes multiple configuration spaces
to become dead, \alggks processes each such killing event separately, in
an~arbitrary order.)

By the description given in \cref{sec:spaces}, $\cspace{q}$ is then
removed from \SET and $d$~new $(d-1)$-dimensional spaces $\cspace{q_1}, \dots,
\cspace{q_d}$ are added to~\SET. \alggks executes appropriate adversarial
actions in the Hydra game: a node $v$ corresponding to $\cspace{q}$ is killed
and its $d$ children on level $d-1$ change state from asleep to alive. \alggks
modifies the mapping to track the change of $\SET$: (new and now alive) spaces
$\cspace{q_1}, \dots, \cspace{q_d}$ become mapped to (formerly asleep and now
alive) $d$ children of $v$. Afterwards, \alggks observes the answer of
algorithm~\alghyd on the factorial tree and replays it. Suppose \alghyd moves
from (now dead) node $v$ to an alive node~$v'$, whose corresponding space is
$\cspace{q'} \in \SET$. In this case, \alggks changes its configuration to the
closest configuration (requiring minimal number of server moves) from
$\cspace{q'}$. It remains to relate its cost to the cost of \alghyd. By
\cref{lem:subspaces} (applied to spaces corresponding to all nodes on the
tree path from $v$ to $v'$), the corresponding $k$-tuples $q$, $q'$ differ on at
most $\dist(v,v')$ positions. Therefore, adjusting the configuration of \alggks,
so that it becomes an~element of~$\cspace{q'}$, requires at most $\dist(v,v')$
server moves, which is exactly the cost~of~\alghyd.

Finally, note that when \alggks processes all killing events, it ends in a
configuration of an~alive space, and hence it can service the
request $r$ from its new configuration.

\item Suppose that request $r$ makes all remaining configurations
phase-infeasible. In such case, \alggks moves an arbitrary server to service
this request, which incurs a cost of $1$. In this case, the current phase ends,
a new one begins, and \alggks initializes a new instance of the Hydra game. 
\end{itemize}

Let $f \geq 1$ be the number of all phases for input $I$ (the last one may be
not finished). The cost of \OPT in a single finished phase is at least $1$. By
the reasoning above, the (expected) cost of \alggks in a single phase is at most
$R+1$. Therefore, $\E[\alggks(I)] \leq (R+1) \cdot f \leq (R+1) \cdot \OPT(I) +
(R+1)$, which completes the proof.
\end{proof}

Using our algorithm \ouralg for the Hydra game along with the reduction given
by \cref{thm:reduction} immediately implies the following result.

\begin{corollary}
There exists a randomized $O(k^2 \cdot \log k)$-competitive online algorithm for
the \GKS in uniform metrics.
\end{corollary}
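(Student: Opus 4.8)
The plan is to obtain the corollary by directly composing the two main results already in place: the cost bound for \ouralg given by \cref{thm:algcost} and the reduction from the Hydra game to the \GKS given by \cref{thm:reduction}.

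First I would specialize \ouralg to the Hydra game on the $k$-factorial tree $T$ introduced at the beginning of \cref{sec:reduction}. The structural parameters of $T$ are read off its definition: the root lies at level $k$, so $\height = k$, and since the subtree rooted at any level-$d$ node has exactly $d!$ leaves, the whole tree has $\leaves = k!$ leaves. Plugging these into \cref{thm:algcost}, the total cost of \ouralg on $T$ is $O(\height \cdot (1 + \log \leaves)) = O(k \cdot (1 + \log k!))$, which by the standard estimate $\log k! = O(k \log k)$ equals $O(k^2 \log k)$. Thus \ouralg witnesses a randomized algorithm for the Hydra game on the $k$-factorial tree with total expected cost $R = O(k^2 \log k)$.

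Second, I would invoke \cref{thm:reduction} with $\alghyd := \ouralg$. It yields a randomized $(R+1)$-competitive online algorithm \alggks for the \GKS in uniform metrics, and $R + 1 = O(k^2 \log k)$, which is exactly the claimed bound. I do not expect a genuine obstacle at this stage: all of the substance is already contained in \cref{thm:algcost} (the potential-function analysis of \ouralg) and in \cref{thm:reduction} (maintaining the bijection between $d$-dimensional configuration spaces and level-$d$ tree nodes, which works precisely because \cref{lem:subspaces} matches the branching of the factorial tree). The only things one must check here are that the $k$-factorial tree has height $k$ and $k!$ leaves — both immediate from its recursive definition — and that these numbers make the arithmetic collapse to $O(k^2 \log k)$ via Stirling's estimate.
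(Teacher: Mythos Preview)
Your proposal is correct and matches the paper's own argument essentially verbatim: specialize \cref{thm:algcost} to the $k$-factorial tree (height $k$, $k!$ leaves) to get $R = O(k \cdot (1+\log k!)) = O(k^2 \log k)$, then feed this $R$ into \cref{thm:reduction}. The paper treats the corollary as immediate for exactly these reasons, so there is nothing to add.
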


\section{Lower bound}
\label{sec:lower}

Next, we show that that competitive ratio of any (even randomized) 
online algorithm for the \GKS in uniform metrics is at least $\Omega(k)$, 
as long as each metric space~$M_i$ contains at least
two points.  For each $M_i$, we choose two distinct points, the
initial position of the $i$-th server, which we denote $0$ and any other
point, which we denote $1$. The adversary is going to issue only requests
satisfying $r_i \in \{0,1\}$ for all~$i$, hence without loss of generality any
algorithm will restrict its server's position in each $M_i$ to $0$ and $1$.
(To see this, assume without loss of generality that the algorithm is lazy,
i.e., it is only allowed to move when a request is not covered by any of its
server, and is then allowed only to move a single server to cover that request.) 
For this reason, from now on we assume that $M_i = \{0,1\}$ for all $i$,
ignoring superfluous points of the metrics. 

The configuration of any algorithm can be then encoded using a binary word of
length~$k$. It is convenient to view all these $2^k$ words (configurations) as
nodes of the $k$-dimensional hypercube: two words are connected by a hypercube
edge if they differ at exactly one position. Observe that a cost of changing
configuration $c$ to $c'$, denoted $\dist(c,c')$ is exactly the distance
between $c$ and~$c'$ in the hypercube, equal to the number of positions on
which the corresponding binary strings differ. 

In our construction, we compare the cost of an online algorithm to the cost of
an~algorithm provided by the adversary. Since $\OPT$'s cost can be only lower
than the latter, such approach yields a lower bound on the performance of
the online algorithm.

For each word $w$, there is exactly one word at distance $k$, which we call
its \emph{antipode} and denote $\bar{w}$. Clearly, $\bar{w}_i = 1-w_i$ for
all~$i$. Whenever we say that an adversary \emph{penalizes} configuration $c$,
it issues a request at $\bar{c}$. An algorithm that has servers at
configuration~$c$ needs to move at least one of them. On the other hand, any
algorithm with servers at configuration $c' \neq c$ need not move its servers;
this property will be heavily used by an~adversary's algorithm.


\subsection{A Warm-Up: Deterministic Algorithms}

To illustrate our general framework, we start with a description of an
$\Omega(2^k / k)$ lower bound that holds for any deterministic algorithm
\DET~\cite{BaElKN18}. (A more refined analysis yields a better lower bound of
$2^k - 1$~\cite{KouTay04}.) The adversarial strategy consists of a sequence of
independent identical phases. Whenever \DET is in some configuration, the
adversary penalizes this configuration. The phase ends when $2^k-1$
\emph{different} configurations have been penalized. This means that \DET was
forced to move at least $2^k-1$ times, at a total cost of at least $2^k - 1$. In
the same phase, the adversary's algorithm makes only a single move (of cost at
most $k$) at the very beginning of the phase: it moves to the only configuration
that is not going to be penalized in the current phase. This shows that the
\DET-to-\OPT ratio in each phase is at least $(2^k - 1) / k$.


\subsection{Extension to Randomized Algorithms}
\label{sec:lower_rand}

Adopting the idea above to a randomized algorithm \RAND is not straightforward.
Again, we focus on a single phase and the adversary wants to leave (at least)
one configuration non-penalized in this phase. However, now the adversary only
knows \RAND's probability distribution $\rprob$ over configurations and not its
actual configuration. (At any time, for any configuration $c$, $\rprob(c)$ is
the probability that \RAND's configuration is equal to $c$.) We focus on
a~greedy adversarial strategy that always penalizes the configuration with
maximum probability. However, arguing that \RAND incurs a significant cost is
not as easy as for \DET.

First, the support of $\rprob$ can also include configurations that have been
already penalized by the adversary in the current phase. This is but a nuisance,
easily overcome by penalizing such configurations repeatedly if \RAND keeps
using them, until their probability becomes negligible. Therefore, in this
informal discussion, we assume that once a~configuration $c$ is penalized in a
given phase, $\rprob(c)$ remains equal to zero.

Second, a~straightforward analysis of the greedy adversarial strategy fails to
give a~non-trivial lower bound. Assume that $i \in \{0, \dots, 2^k-2\}$
configurations have already been penalized in a given phase, and the support of
$\rprob$ contains the remaining $2^k - i$ configurations. The maximum
probability assigned to one of these configurations is at least $1/(2^k-i)$.
When such configuration is penalized, \RAND needs to move at least one server
with probability at least $1/(2^k-i)$. With such bounds, we would then prove
that the algorithm's expected cost is at least $\sum_{i=0}^{2^k-2} 1/(2^k-i) =
\Omega(\log 2^k) = \Omega(k)$.  Since we bounded the adversary's cost per phase
by $k$, this gives only a constant lower bound.

What we failed to account is that the actual distance traveled by \RAND in a
single step is either larger than $1$ or \RAND would not be able to maintain a
uniform distribution over non-penalized configurations. However, actually
exploiting this property seems quite complex, and therefore we modify the
adversarial strategy instead.

The crux of our actual construction is choosing a subset $Q$ of the
configurations, such that $Q$ is sufficiently large (we still have $\log(|Q|) =
\Omega(k)$), but the minimum distance between any two points of $Q$ is
$\Omega(k)$. Initially, the adversary forces the support of $\rprob$ to be
contained in $Q$. Afterwards, the adversarial strategy is almost as described
above, but reduced to set $Q$ only. This way, in each step the support of
$\rprob$ is a set $S \subseteq Q$, and the adversary forces \RAND to move with
probability at least $1/|S|$ \emph{over a distance at least $\Omega(k)$}, which
is the extra $\Theta(k)$ factor. We begin by proving the existence of such a set
$Q$ for sufficiently large $k$. The proof is standard (see, e.g., Chapter 17 of
\cite{Jukna11}); we give it below for completeness.

\begin{lemma}
\label{lem:code}
For any $k \geq 16$, there exists a set $\LB \subseteq \{0,1\}^k$
of binary words of length $k$, satisfying the following two 
properties:
\begin{description}
	\item[size property:]{$|\LB| \geq 2^{k/2}/k$,}
	\item[distance property:]{$\dist(v,w) \geq k/16$ for any $v,w \in \LB$.} 
\end{description}
\end{lemma}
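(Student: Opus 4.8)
The plan is to use the probabilistic method: pick a random subset of $\{0,1\}^k$ and delete bad points. Concretely, I would let $X$ be a random subset of $\{0,1\}^k$ where each word is included independently with probability $p$, for a suitable $p = p(k)$. Then I would estimate the expected number of ``close pairs'' — unordered pairs $\{v,w\} \subseteq X$ with $\dist(v,w) < k/16$ — and show that in expectation $|X|$ minus the number of close pairs is at least $2^{k/2}/k$. Deleting one endpoint from each close pair yields a set $\LB$ with the distance property, of size at least $|X| - (\text{number of close pairs})$; so there is a choice of $X$ for which the surviving set has the required size.

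The key quantities to control are: first, $\E[|X|] = p \cdot 2^k$; second, the expected number of close pairs, which is $\binom{2^k}{2} p^2 \cdot \frac{N_{<k/16}}{2^k}$ where $N_{<k/16} = \sum_{j=0}^{\lceil k/16\rceil - 1}\binom{k}{j}$ is the number of words within distance less than $k/16$ of a fixed word (the probability that two uniformly random distinct words are close is essentially $N_{<k/16}/2^k$). So the expected size of the surviving set is at least
\begin{equation*}
p \cdot 2^k - \tfrac12 p^2 \cdot 2^k \cdot N_{<k/16}\,.
\end{equation*}
I would then invoke a standard entropy/volume bound on $N_{<k/16}$: since $k/16 < k/2$, we have $N_{<k/16} \le 2^{H(1/16)\,k}$ where $H(\cdot)$ is the binary entropy function, and $H(1/16)$ is a constant strictly less than $1/2$ (numerically about $0.337$). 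Choosing $p$ so that the two terms balance appropriately — roughly $p \approx 2^{-H(1/16)k}$, or more carefully $p = c \cdot 2^{-H(1/16)k}$ for a small constant $c$ — makes the first term dominate and gives a surviving set of size $\Omega(p\cdot 2^k) = \Omega(2^{(1-H(1/16))k})$, which is at least $2^{k/2}/k$ for all sufficiently large $k$ (and one checks $k \ge 16$ suffices, possibly after a slightly more careful constant chase or by noting $1 - H(1/16) > 1/2$).

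The main obstacle — really the only non-routine part — is getting the constants to work out so that the bound holds from $k = 16$ onward rather than merely for ``$k$ large enough''. The clean inequality $1 - H(1/16) > 1/2$ guarantees asymptotic success with room to spare, so the $1/k$ slack in the size requirement is easily absorbed, but verifying the threshold $k \ge 16$ precisely may require either a sharper estimate on $N_{<k/16}$ (e.g.\ $N_{<\alpha k} \le 2^{H(\alpha)k}$ valid for $\alpha \le 1/2$, together with a concrete numerical bound on $H(1/16)$) or simply checking the small cases by hand. Since the lemma statement already hedges with ``for any $k \ge 16$'', I expect the intended proof to pick the parameters generously and push the explicit constant verification into a short computation; I would follow the same route, stating the entropy bound, choosing $p$ explicitly, and then bounding the expected surviving size below by $2^{k/2}/k$ for $k \ge 16$.
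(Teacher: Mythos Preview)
Your proposal is correct, but the paper takes a different and somewhat simpler route. Instead of the probabilistic deletion method, the paper uses a greedy covering argument (the Gilbert--Varshamov construction): start with $\LB = \emptyset$ and repeatedly add any word not lying in the $\ell$-neighborhood $B_\ell(q)$ of some $q \in \LB$, with $\ell = \lfloor k/16 \rfloor$. The distance property is then immediate, and the size bound follows from $|\LB| \geq 2^k / \max_q |B_\ell(q)|$ together with the elementary estimate $|B_\ell(q)| = \sum_{i \le k/16} \binom{k}{i} < k \cdot (32\e)^{k/16} < k \cdot 2^{k/2}$. This avoids both the entropy bound and the optimization over $p$, and delivers the stated constants for $k \ge 16$ directly without a separate small-case check. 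Your approach, on the other hand, actually yields a stronger asymptotic size (roughly $2^{(1-H(1/16))k} \approx 2^{0.66k}$ rather than $2^{k/2}/k$), though that extra strength is not needed here; the price is a slightly longer constant chase to confirm the threshold $k \ge 16$.
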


\begin{proof}
Let $\ell = \lfloor k/16 \rfloor \geq k / 32$. For any word $q$, we define its
$\ell$-neighborhood $B_\ell(q) = \{ w : \dist(q,w) \leq \ell \}$.

We construct set \LB greedily. We maintain set $\LB$ and set
$\Gamma(\LB) = \bigcup_{q \in Q} B_\ell(q)$. We start with $\LB = \emptyset$ (and
thus with $\Gamma(\LB) = \emptyset$). In each step, we extend \LB with an arbitrary
word $w \in \{0,1\}^k \setminus \Gamma(\LB)$ and update $\Gamma(\LB)$ accordingly. 
We proceed until set $\Gamma(\LB)$
contains all possible length-$k$ words. Clearly, the resulting set \LB satisfies the distance
property. 

It remains to show that $|\LB| \geq 2^{k/2}/k$. For a
word $q$, the size of $B_\ell(q)$ is
\begin{align*}
|B_\ell(q)| 
	= & \sum_{i = 0}^{\lfloor k/16 \rfloor} \binom{k}{i} 
	< k \cdot \binom{k}{\lfloor k/16 \rfloor} 
	\leq k \cdot \left (\frac{k\cdot \e}{\lfloor k/16 \rfloor} \right)^{\lfloor k/16 \rfloor} \\
	\leq &\; k \cdot \left (\frac{k\cdot \e}{k/32} \right)^{k/16} 
	= k \cdot \left((32 \cdot \e)^{1/8}\right)^{k/2} < k \cdot 2^{k/2} \,.
\end{align*}
That is, in a single step, $\Gamma(\LB)$ increases by at most $k \cdot
2^{k/2}$ elements. Therefore, the process continues for at least $2^k / (k
\cdot 2^{k/2}) = 2^{k/2} / k$ steps, and thus the size of \LB is at least
$2^{k/2} / k$.
\end{proof}

\begin{theorem}\label{thm:lb}
The competitive ratio of every (randomized) online algorithm solving the \GKS
in uniform metrics is at least $\Omega(k)$.
\end{theorem}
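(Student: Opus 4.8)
The plan is to run the hypercube framework set up above against a sequence of identical phases, showing that in each finished phase the randomized algorithm \RAND is forced to pay $\Omega(k^2)$ in expectation while the offline algorithm provided by the adversary pays at most $k$; since $\OPT$ is no more expensive, running sufficiently many phases then yields the ratio $\Omega(k)$ even against an additive constant, exactly as in the deterministic warm-up. Here I fix once and for all the set $\LB\subseteq\{0,1\}^k$ guaranteed by \cref{lem:code}; note that $|\LB|\geq 2^{k/2}/k$ gives $\log|\LB|=\Omega(k)$, and that for $k<16$, where \cref{lem:code} does not apply, the bound $\Omega(k)$ is a constant and there is nothing to prove.

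Within a phase the adversary maintains a growing set $F$ of \emph{forbidden} configurations, initialized to $\{0,1\}^k\setminus\LB$, together with the invariant that $\rprob$ places total probability less than a tiny threshold $\eps$ on $F$ (the value of $\eps$ is chosen small as a function of $k$ and of the intended number of phases). It restores this invariant after every change by repeatedly penalizing the currently most likely configuration of $F$ (recall penalizing $c$ means requesting $\bar c$, which forces $\rprob(c)=0$ after servicing). If such a re-penalization loop fails to terminate, \RAND moves a non-negligible amount of probability over distance at least $1$ infinitely often, so its cost is infinite and we are done; hence I may assume each loop ends. Whenever the invariant holds and $|\LB\setminus F|\geq 2$, the adversary \emph{promotes} the most likely configuration $c$ of $\LB\setminus F$ — which carries probability at least $(1-\eps)/|\LB\setminus F|$, since nearly all the mass lies on $\LB\setminus F$ — moves $c$ into $F$, penalizes it, and restores the invariant. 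The phase ends once $|\LB\setminus F|=1$; its surviving configuration $q^{\star}\in\LB$ is never penalized during the phase (the preamble only touches $\{0,1\}^k\setminus\LB$ and the main part only touches $F$), so the adversary's offline algorithm may reside at $q^{\star}$ for the whole phase, paying at most $k$ to move there from the previous phase's survivor.

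For the cost of \RAND the crucial question is where the probability mass of a promoted configuration $c$ ends up. Right after $c$ is penalized, none of its mass is at $c$; once the invariant is restored, all but at most $\eps$ of the total mass sits on $\LB\setminus F$, so at least $(1-\eps)/|\LB\setminus F|-\eps$ of the mass originally at $c$ has moved from $c$ to some configuration of $\LB$ different from $c$. By the distance property of $\LB$ in \cref{lem:code}, any such unit of mass travels a total distance of at least $k/16$: even if \RAND realizes this only through many unit-length hops (and a lazy \RAND will, since the forced hops go first to non-$\LB$ neighbours before the re-penalizations push the mass back onto $\LB\setminus F$), the triangle inequality forces the accumulated movement cost to be at least $k/16$ times the transported mass. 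Summing this over all promotions, as $|\LB\setminus F|$ runs from $|\LB|$ down to $2$, bounds the expected per-phase cost of \RAND from below by $\frac{(1-O(\eps))\,k}{16}\bigl(H(|\LB|)-O(1)\bigr)=\Omega(k\log|\LB|)=\Omega(k^2)$. Combined with the offline cost of at most $k$ per phase, the per-phase ratio is $\Omega(k)$, and letting the number of finished phases grow absorbs the additive constant, which proves the theorem.

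I expect the charging argument of the third paragraph to be the main obstacle. A single request can forbid only one configuration (its antipode), and a forced move may leak mass into configurations outside $\LB$ at distance merely $1$, so the $\Omega(k)$-per-unit-mass bound cannot be read off from any individual move; it must be recovered from where the mass ultimately settles, which in turn is governed by the repeated-penalization mechanism that drains $F$. Making the bookkeeping of the threshold $\eps$, the ``non-terminating loop implies infinite cost'' escape, and the point that \RAND cannot pre-empt the construction by rushing to the survivor (any configuration it overloads immediately becomes the most likely one in $\LB\setminus F$, hence the next promoted and penalized) all fit together cleanly is the delicate part; the remaining estimates are routine.
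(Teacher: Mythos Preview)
Your proposal is correct and follows essentially the same route as the paper. Both use the code set $\LB$ of \cref{lem:code}, a \emph{Confine}-style routine that repeatedly penalizes configurations outside the current target set until their total (or per-point) mass is below a tiny threshold $\eps$, then iteratively remove the most probable surviving point of $\LB$, charging $\Omega(k)$ transport per removal via the distance property and summing harmonically over $|\LB|$ removals; the adversary's offline algorithm sits at the sole survivor for cost at most $k$. Your growing forbidden set $F$ is just the complement of the paper's shrinking sets $Q_i$, and your explicit invocation of the triangle inequality for the ``mass may hop through non-$\LB$ neighbours'' issue makes precise what the paper leaves implicit. The only cosmetic difference is that the paper fixes $\eps = 2^{-(2k+2)}$ once and for all (it need not depend on the number of phases), whereas you leave it parametric.
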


\begin{proof}
In the following we assume that $k \geq 16$, otherwise the theorem follows trivially. 
We fix any randomized online algorithm \RAND. The lower bound strategy
consists of a sequence of independent phases. Requests of each phase can be
(optimally) serviced with cost at most~$k$ and we show that \RAND's expected
cost for a single phase is $\Omega(k^2)$, i.e., the ratio between these costs
is $\Omega(k)$. As the adversary may present an arbitrary number of phases to
the algorithm, this shows that the competitive ratio of \RAND is $\Omega(k)$,
i.e., by making the cost of \RAND arbitrarily high, the additive constant in
the definition of the competitive ratio (cf.~\cref{sec:gks-preliminaries})
becomes negligible.

As in our informal introduction, $\rprob(c)$ denotes the probability that
\RAND has its servers in configuration $c$ (at time specified in the context).
We extend the notion $\rprob$ to sets, i.e., $\rprob(X) = \sum_{c \in X} \rprob(x)$
where $X$ is a set of configurations. We denote the complement of $X$ (to
$\prod_{i=1}^k M_i$) by $X^C$. We use $\eps = 2^{-(2 k + 2)}$ throughout the
proof.

To make the description concise, we define an auxiliary routine $\con(X)$ for
the adversary (for some configuration set $X$). In this routine, the adversary
repeatedly checks whether there exists a configuration $c \not \in X$, such that
$\rprob(x) > \eps$. In such case, it penalizes $c$; if no such configuration
exists, the routine terminates. We may assume that the procedure always
terminates after finite number of steps, as otherwise \RAND's competitive ratio
would be unbounded. (\RAND pays at least $\eps$ in each step of the routine
while an adversary's algorithm may move its servers to any configuration from
set $X$, and from that time service all requests of $\con(X)$ with no cost.) 

The adversarial strategy for a single phase is as follows. First, it constructs
$Q_1$ as the configuration set fulfilling the properties of
\cref{lem:code}; let $m$ denote its cardinality. The phase consists then
of $m$ executions of $\con$ routine: $\con(Q_1),\con(Q_2),\dots,\con(Q_m)$. For
$i \in \{2,\dots,m\}$, set $Q_i$ is defined in the following way. The adversary
observes \RAND's distribution $\rprob$ right after routine $\con(Q_{i-1})$
terminates; at this point this distribution is denoted $\rprob_{i-1}$. Then, the
adversary picks configuration $c_{i-1}$ to be the element of $Q_{i-1}$ that
maximizes the probability $\rprob_{i-1}$, and sets $Q_i = Q_{i-1} \setminus \{
c_{i-1} \}$.

We begin by describing the way that the adversary services the requests. Observe
that set $Q_m$ contains a single configuration, henceforth denoted $c^*$. The
configuration $c^*$ is contained in all sets $Q_1, \ldots, Q_m$, and thus $c^*$
is never penalized in the current phase. Hence, by moving to $c^*$ at the
beginning of the phase, which costs at most $k$, and remaining there till the
phase ends, the adversary's algorithm services all phase requests at no further
cost.

It remains to lower-bound the cost of \RAND. $\con(Q_1)$ may incur no cost; its
sole goal is to confine the support of $\rprob$ to $Q_1$. Now, we fix any $i \in
\{2,\dots,m\}$ and estimate the cost incurred by $\con(Q_i)$.  Recall that the
probability distribution right before $\con(Q_i)$ starts (and right after
$\con(Q_{i-1})$ terminates) is denoted $\rprob_{i-1}$ and the distribution right
after $\con(Q_i)$ terminates is denoted $\rprob_i$.

During $\con(Q_i)$ a probability mass $\rprob_{i-1}(c_{i-1})$, is moved from
$c_{i-1}$ to nodes of set $Q_i$ (recall that $Q_i \uplus \{c_{i-1}\} =
Q_{i-1}$). Some negligible amounts (at most $\rprob_i(Q^C_i)$) of this
probability may however remain outside of $Q_i$ after $\con(Q_i)$ terminates.
That is, \RAND moves at least the probability mass of $\rprob_{i-1}(c_{i-1}) -
\rprob_i(Q^C_{i})$ from configuration $c_{i-1}$ to configurations from $Q_i$
(i.e., along a distance of at least $\dist(c_{i-1},Q_i)$), Therefore, its
expected cost due to $\con(Q_i)$ is at least $(\rprob_{i-1}(c_{i-1}) -
\rprob_i(Q^C_{i}) ) \cdot \dist(c_{i-1},Q_i)$.

First, using the properties of $\con(Q_{i-1})$ and the definition of $c_{i-1}$, we obtain 
\begin{equation}
\label{eq:prob1}
	\rprob_{i-1}(c_{i-1}) 
		\geq \frac{\rprob_{i-1}(Q_{i-1})}{|Q_{i-1}|} 
		= \frac{1 - \rprob_{i-1}(Q^C_{i-1})}{|Q_{i-1}|} 
		\geq \frac{1 - |Q^C_{i-1}| \cdot \eps}{|Q_{i-1}|} 
	> \frac{1 - 2^{-(k+2)}}{|Q_{i-1}|} \,. 
\end{equation}
Second, using the properties of $\con(Q_i)$ yields 
\begin{equation}
\label{eq:prob2}
	\rprob_i(Q^C_i) \leq |Q^C_i| \cdot \eps < 2^{-(k+2)} = \frac{2^{-2}}{2^k} 
		< \frac{2^{-2}}{|Q_{i-1}|} \,. 
\end{equation}
Using \eqref{eq:prob1} and \eqref{eq:prob2}, we bound the expected cost of $\RAND$ 
due to routine $\con(Q_i)$ as
\begin{align}
	\nonumber
	\E[\RAND(\con(Q_i))]\;
		\geq\; & \left(\rprob_{i-1}(c_{i-1}) - \rprob_i(Q^C_{i})\right) \cdot \dist(c_{i-1},Q_i) \\
	\nonumber	
		\geq\; & \left(\frac{1-2^{-(k+2)}}{|Q_i|} - \frac{2^{-2}}{|Q_i|} \right)  \cdot \frac{k}{16} 
		\geq \frac{1}{2 \cdot |Q_i|} \cdot \frac{k}{16} \\
	\label{eq:prob3}	
		=\; & k / (32 \cdot (m-i+1))
\end{align}
The second inequality above follows as all configurations from $\{c_{i-1}\}
\uplus Q_i$ are distinct elements of $Q_1$, and hence their mutual distance is
at least $k/16$ by the distance property of~$Q_1$
(cf.~\cref{lem:code}). By summing \eqref{eq:prob3} over $i \in
\{2,\dots,m\}$, we obtain that the total cost of \RAND in a single phase is
$
	\E[\RAND] \geq \sum_{i=2}^m \E[\RAND(\con(Q_i))] \geq \frac{k}{32} \cdot 
		\sum_{i=2}^m \frac{1}{m-i+1} = \Omega(k \cdot \log m) = \Omega(k^2)
$.
The last equality holds as $m \geq 2^{k/2} / k$ by the size property of $Q_1$.
(cf.~\cref{lem:code}).
\end{proof}


\section{Final remarks}

In this paper, we presented an abstract Hydra game whose solution we applied to
create an~algorithm for the \GKS. Any improvement of our \ouralg strategy for
the Hydra game would yield an improvement for the \GKS. However, we may show
that on a wide class of trees (that includes factorial trees used in our
reduction), \ouralg is optimal up to a constant factor. Thus, further improving
our upper bound of $O(k^2 \log k)$ for the \GKS will require another approach.

A lower bound for the cost of any randomized strategy for the Hydra game is
essentially the same as our single-phase construction from
\cref{sec:lower_rand} for the \GKS. That is, the adversary fixes a subset $Q$
of tree leaves, makes only nodes of $Q$ alive (this forces the algorithm to be
inside set $Q$), and then iteratively kills nodes of $Q$ where the algorithm is
most likely to be. As in the proof from \cref{sec:lower_rand}, such adversarial
strategy incurs the cost of $\Omega(\mindist(Q) \cdot \log |Q|)$, where
$\mindist(Q) = \min_{u \neq v\in Q} \dist(u,v)$.

The construction of appropriate $Q$ for a tree $T$ of depth $k = \height$ (be
either the $k$-factorial tree or the complete $k$-ary tree) is as follows. 
Let $Z$ be the set of all nodes of $T$ at level $\lfloor k/2
\rfloor$; for such trees, $\log |Z| = \Omega(\log \leaves)$. Let $Q$
consist of $|Z|$ leaves of the tree, one per node of~$Z$ chosen arbitrarily from
its subtree. Then, $\mindist(Q) = \Omega (\height)$ and $\log |Q| = \Omega(\log
\leaves)$, and thus the resulting lower bound $\Omega(\height \cdot \log \leaves)$ 
on the cost asymptotically matches the performance of \ouralg from \cref{thm:algcost}.

\bibliographystyle{plainurl}
\bibliography{references}

\appendix

\section{Probability Distribution and Algorithms}
\label{app:app}

When we described our algorithm \ouralg for the Hydra game, we assumed that its current
position in the tree is a~random node with probability distribution given by $\prob$.
In a single step, \ouralg decreases probability at some node $u$ from
$\prob(u)$ to zero and increases the probabilities of some other nodes $w_1,\ldots,
w_\ell$ by a total amount of $\prob(u)$. Such change can be split into 
$\ell$~elementary changes, each decreasing the probability at node $u$ by $p_i$ and
increasing it at node $w_i$ by the same amount. Each elementary change can be
executed and analyzed as shown in the following lemma.

\begin{lemma}
Let $\prob$ be a probability distribution describing the position of \ALG in the tree.
Fix two tree nodes, $u$ and $w$. Suppose $\prob'$ is a probability
distribution obtained from $\prob$ by decreasing $\prob(u)$ by $p$ and
increasing $\prob(w)$ by $p$. Then, \ALG can change its random position, so that it will be 
described by $\prob'$, and the expected cost of such change is $p\cdot \dist(u,w)$.
\end{lemma}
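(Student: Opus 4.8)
The plan is to exhibit an explicit coupling between the old random position (distributed as $\prob$) and the new one (distributed as $\prob'$), and then compute the expected distance traveled under that coupling. The key observation is that $\prob'$ agrees with $\prob$ everywhere except that it moves probability mass exactly $p$ from $u$ to $w$; in particular $p \leq \prob(u)$, so this is a well-defined operation. I would describe \ALG's randomized action conditioned on its current node: if \ALG is currently at some node $v \neq u$, it stays at $v$; if \ALG is currently at $u$ (which happens with probability $\prob(u)$), then with probability $p/\prob(u)$ it moves to $w$ and with the remaining probability $1 - p/\prob(u)$ it stays at $u$. This rule depends only on \ALG's current node, so it is a legitimate behavioral description of an online algorithm.

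Next I would verify that this rule indeed induces $\prob'$. For any node $v \notin \{u,w\}$, the probability of ending at $v$ is unchanged and equals $\prob(v) = \prob'(v)$. The probability of ending at $u$ becomes $\prob(u) \cdot (1 - p/\prob(u)) = \prob(u) - p = \prob'(u)$. The probability of ending at $w$ becomes $\prob(w) + \prob(u)\cdot (p/\prob(u)) = \prob(w) + p = \prob'(w)$. Hence the resulting distribution is exactly $\prob'$.

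Finally, the expected cost: the only event in which \ALG moves is when it starts at $u$ and then relocates to $w$, which occurs with probability $\prob(u) \cdot (p/\prob(u)) = p$, and in that event it pays $\dist(u,w)$; in every other case it pays nothing. Therefore the expected movement cost is exactly $p \cdot \dist(u,w)$, matching the charge assigned in the potential-based description of \ouralg. I do not expect a genuine obstacle here — the statement is essentially a bookkeeping lemma — but the one point requiring a little care is handling the degenerate cases $p = \prob(u)$ (the conditional move probability is $1$, so \ALG deterministically leaves $u$) and $\prob(u) = 0$ with $p = 0$ (the action is vacuous); both are consistent with the formula and the coupling above, so no separate argument is needed. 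One then applies this lemma to each of the $\ell$ elementary changes comprising a single step of \ouralg and sums, recovering the claim that the behavioral and distributional descriptions incur the same expected cost.
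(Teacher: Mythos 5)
Your proposal is correct and follows essentially the same argument as the paper: condition on the current node, move from $u$ to $w$ with probability $p/\prob(u)$, verify the induced distribution is $\prob'$, and compute the expected cost as $p\cdot\dist(u,w)$. The extra remarks on the degenerate cases $p=\prob(u)$ and $\prob(u)=0$ are a harmless (and welcome) bit of added care not present in the paper.
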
 

\begin{proof}
We define \ALG's action as follows: if \ALG is at node $u$, then with
probability $p/\prob(u)$ it moves to node $w$. If \ALG is at some other node
it does not change its position.

We observe that the new distribution of \ALG is exactly $\prob'$. Indeed, the
probability of being at node $u$ decreases by $\prob(u) \cdot p /
\prob(u) = p$, while the probability of being at node $w$ increases by the
same amount. The probabilities for all nodes different than $u$ or $w$ remain
unchanged.

Furthermore, the probability that \ALG moves is $\prob(u) \cdot (p / \prob(u))
= p$ and the traveled distance is $\dist(u,w)$. The expected cost of the move
is then $p \cdot \dist(u,w)$, as desired.
\end{proof}

\end{document}